\newcommand{\QSL}{QSL\xspace}
\newcommand{\ML}{ML\xspace}
\DeclareMathOperator{\esssup}{ess\ sup}
\newtheorem{lemma}{Lemma}
\newtheorem{theorem}{Theorem}
\newtheorem{corollary}{Corollary}
\newtheorem{remark}{Remark}
\begin{document}
\title{$\boldsymbol{\alpha_{>}(\epsilon) = \alpha_{<}(\epsilon)}$ For The
 Margolus-Levitin Quantum Speed Limit Bound}
\author{H. F. Chau}
\email{{\tt hfchau@hku.hk}}
\affiliation{Department of Physics, University of Hong Kong, Pokfulam Road,
 Hong Kong}
\date{\today}

\begin{abstract}
 The Margolus-Levitin (\ML) bound says that for any time-independent
 Hamiltonian, the time needed to evolve from one quantum state to another is
 at least $\pi \alpha(\epsilon) / (2 \braket{E-E_0})$, where $\braket{E-E_0}$
 is the expected energy of the system relative to the ground state of the
 Hamiltonian and $\alpha(\epsilon)$ is a function of the fidelity $\epsilon$
 between the two state.  For a long time, only a upper bound
 $\alpha_{>}(\epsilon)$ and lower bound $\alpha_{<}(\epsilon)$ are known
 although they agree up to at least seven significant figures.  Lately,
 H\"{o}rnedal and S\"{o}nnerborn proved an analytical expression for
 $\alpha(\epsilon)$, fully classified systems whose evolution times saturate
 the \ML bound, and gave this bound a symplectic-geometric interpretation.
 Here I solve the same problem through an elementary proof of the \ML bound.
 By explicitly finding all the states that saturate the \ML bound, I show that
 $\alpha_{>}(\epsilon)$ is indeed equal to $\alpha_{<}(\epsilon)$.  More
 importantly, I point out a numerical stability issue in computing
 $\alpha_{>}(\epsilon)$ and report a simple way to evaluate it efficiently and
 accurately.
\end{abstract}

\maketitle

\section{Introduction}
\label{Sec:Intro}
 Quantum information processing speed cannot be arbitrarily fast.  In
 particular, Margolus and Levitin proved that to evolve from a state to
 another state orthogonal to it through a time-independent Hamiltonian, the 
 minimum time required is inversely proportional to its expected energy
 relative to the ground state of the Hamiltonian~\cite{PHYSCOMP96,ML98}.  This
 so-called \ML bound is a significant result for it means that non-zero
 evolution time is required to change a quantum state under any
 time-independent Hamiltonian.  Time, therefore, is a genuine resource in
 quantum information processing.  Since then, a lot of bounds of this type,
 commonly known as quantum speed limits (\QSL{s}), are
 found~\cite{[{See, for example, }][{ and references cited therein.}]Frey16}.

 Shortly after the discovery of this \ML bound, Giovannetti
 \emph{et al.}~\cite{GLM03} extended it to the more general situation.  More
 precisely, they proposed that the evolution time $\tau$ from a state $\rho$
 to another state $\rho'$ under any time-independent Hamiltonian must be
 lower-bounded by
\begin{equation}
 \frac{\tau}{\hbar} \ge \frac{\pi\alpha(\epsilon)}{2\braket{E-E_0}} ,
 \label{E:ML_bound_original_form}
\end{equation}
 where $\epsilon = F(\rho,\rho') \equiv \| \sqrt{\rho} \sqrt{\rho'} \|_1^2$ is
 the fidelity between the initial and final states, $\braket{E-E_0}$ is the
 expected energy of the state relative to the ground state energy of the
 Hamiltonian, and $\alpha(\epsilon)$ is a function independent of the
 Hamiltonian and the initial state of the system.  They substantiated this
 bound numerically without actually proving it ~\cite{GLM03}.  Researchers
 generally refer to this generalized result also as the \ML bound.

 Giovannetti \emph{et al.} gave a lower and an upper bound of the function
 $\alpha(\epsilon)$ in their paper~\cite{GLM03}.  Specifically, for any $q \ge
 0$, they considered the inequality
\begin{equation}
 \cos x + q\sin x \ge 1 - m x
 \label{E:cos_inequality_GLM}
\end{equation}
 for $x \ge 0$.  Here $m \ge 0$ plus the auxiliary variable $y$ are defined as
 the solution of the system of equations
\begin{subequations}
 \label{E:cos_inequality_conditions_GLM}
\begin{equation}
 m = \frac{y + \sqrt{y^2 (1+q^2) + q^2}}{1+y^2}
\end{equation}
 and
\begin{equation}
 \sin y = \frac{m(1-q y) + q}{1+q^2}
\end{equation}
\end{subequations}
 for $y\in [\pi - \tan^{-1}(1/q),\pi + \tan^{-1}(q)]$.  They then used
 Inequality~\eqref{E:cos_inequality_GLM} to prove that
\begin{align}
 \alpha(\epsilon) &\ge \alpha_{<}(\epsilon) \nonumber \\
 &\equiv \min_\phi \left( \max_q \left\{ \frac{2 [ 1 - \sqrt{\epsilon} (\cos
  \phi - q \sin\phi)]}{\pi m} \right\} \right) .
 \label{E:alpha_lower_bound}
\end{align}
 In addition, by considering the minimum time evolution for the states
 $\ket{\Omega_\xi} = \sqrt{1 - \xi^2} \ket{0} + \xi \ket{E_1}$ (where $0 \le
 \xi \le 1$ and $\ket{0}$ as well as $\ket{E_1}$ are eigenvectors of the
 Hamiltonian with eigenvalues $0$ and $E_1 > 0$, respectively), they further
 showed that
\begin{equation}
 \alpha(\epsilon) \le \alpha_{>}(\epsilon) \equiv \frac{2 z}{\pi} \cos^{-1}
 \left[ 1 - \frac{1-\epsilon}{2 z (1-z)} \right] ,
 \label{E:alpha_upper_bound}
\end{equation}
 where $z$ is the value of $\xi^2$ that minimizes the evolution time.  In
 other words, $z$ is given by
\begin{equation}
 \cos^{-1} \left[ 1 - \frac{1-\epsilon}{2 z (1-z)} \right] = \frac{1-2z}{1-z}
 \sqrt{\frac{1-\epsilon}{\epsilon-1+4z (1-z)}} .
 \label{E:alpha_upper_bound_constraint}
\end{equation}

 Giovannetti \emph{et al.} believed that $\alpha_{<}(\epsilon) =
 \alpha_{>}(\alpha)$ for these two functions agree numerically to at least
 7~significant figures~\cite{GLM03}.  Nonetheless, they failed to give a
 proof.  After almost 20~years, H\"{o}rnedal and S\"{o}nnerborn broke the
 silence on this matter lately.  They showed that $\alpha(\epsilon) =
 \alpha_{>}(\epsilon)$ for qubit systems by explicitly classifying all initial
 qubit states whose evolution times equal the \ML lower bound.  Then they
 extended their prove to higher-dimensional Hilbert space systems and gave a
 symplectic geometry interpretation of the \ML bound~\cite{HS23}.

 Here I show that $\alpha_{>}(\epsilon)$ is indeed equal to
 $\alpha_{<}(\epsilon)$ by first giving an elementary alternative proof of the
 \ML bound.  This proof gives equivalent expressions for
 $\alpha_{>}(\epsilon)$ and $\alpha_{<}(\epsilon)$.  More importantly, it
 makes the necessary and sufficient conditions for saturating the \ML bound
 apparent.  (A pair of initial state and time-independent Hamiltonian is said
 to be saturating the \ML bound if the evolution time $\tau$ equals the
 R.H.S. of Inequality~\eqref{E:ML_bound_original_form}.)  Through these
 conditions, I can write down initial quantum states and their corresponding
 time-independent Hamiltonians that saturates the \ML bound and use them to
 show that $\alpha_{>}(\epsilon) = \alpha_{<}(\epsilon)$.  Finally, I
 investigate the computational aspect of this problem.  I point out that using
 Eq.~\eqref{E:alpha_upper_bound_constraint} to find $\alpha_{>}(\epsilon)$
 can be numerically unstable for $\epsilon$ close to~$1$ and report a simple,
 efficient and accurate way to do so over the entire range of $\epsilon \in
 [0,1]$.

\section{A New Proof Of The \ML Bound}
\label{Sec:ML}

\subsection{Auxiliary Results}
\label{Subsec:lemma}

\begin{lemma}
 Let $\theta \in (-\pi,\pi)$, then
 \begin{equation}
  \cos x \ge \cos\theta - A_\theta (x-\theta) \text{~for all~} x \ge \theta ,
  \label{E:cos_inequality}
 \end{equation}
 where
 \begin{align}
  A_\theta &= \sup_{x > \theta} \frac{\cos\theta - \cos x}{x-\theta} \nonumber
   \\
  &=
  \begin{cases}
   \displaystyle \max_{x \in [\max (\frac{\pi}{2},|\theta|),\pi]} \frac{\cos
    \theta - \cos x}{x-\theta}
   & \text{if~} -\pi < \theta < \frac{\pi}{2} , \\
   \\
   \sin \theta & \text{if~} \frac{\pi}{2} \le \theta < \pi .
  \end{cases}
  \label{E:A_phi_def}
 \end{align}
 Moreover, the supremum in Eq.~\eqref{E:A_phi_def} is attained by a unique
 $x \in [\max(\pi/2,|\theta|),\pi]$.
 \label{Lem:cos_inequality}
\end{lemma}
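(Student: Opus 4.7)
The inequality~\eqref{E:cos_inequality} is a tautological consequence of defining $A_\theta$ as the supremum $\sup_{x > \theta}(\cos\theta - \cos x)/(x-\theta)$: for every $x > \theta$ the bound $(\cos\theta - \cos x)/(x - \theta) \le A_\theta$ rearranges directly to $\cos x \ge \cos\theta - A_\theta(x - \theta)$, and the boundary case $x = \theta$ holds with equality. The substance of the lemma lies in evaluating this supremum in the two regimes, which I plan to treat separately.

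For $\theta \in [\pi/2, \pi)$, I will prove the pointwise bound $(\cos\theta - \cos x)/(x - \theta) \le \sin\theta$ for every $x > \theta$; combined with the Taylor expansion of $\cos$ about $\theta$, which yields $\lim_{x \to \theta^+}(\cos\theta - \cos x)/(x - \theta) = \sin\theta$, this identifies the supremum. Set $h(x) = \cos x - \cos\theta + \sin\theta(x - \theta)$, so that $h(\theta) = 0$ and $h'(x) = \sin\theta - \sin x$. On $[\theta, 3\pi - \theta]$ one has $\sin x \le \sin\theta$, making $h$ non-decreasing there. Beyond the local maximum at $3\pi - \theta$, $h$ descends to its next local minimum at $x = \theta + 2\pi$, where $h(\theta + 2\pi) = 2\pi\sin\theta \ge 0$; iterating the period-$2\pi$ pattern shows every subsequent local minimum takes value $2\pi k \sin\theta \ge 0$. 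Hence $h \ge 0$ on $[\theta, \infty)$, giving the claimed bound.

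For $\theta \in (-\pi, \pi/2)$, abbreviate $g(x) = (\cos\theta - \cos x)/(x - \theta)$. The strategy is to localize $\sup_{x > \theta} g$ to the compact interval $[\max(\pi/2, |\theta|), \pi]$ by ruling out both flanks. For $x \ge \pi$, the bound $\cos x \ge -1$ gives $g(x) \le (1 + \cos\theta)/(x - \theta) \le g(\pi)$. For $x \in (\theta, \max(\pi/2, |\theta|))$ I distinguish two sub-cases: when $\theta \in [-\pi/2, \pi/2)$, both $\theta$ and $x$ lie in the interval of concavity $[-\pi/2, \pi/2]$ of $\cos$, so the secant slopes from $(\theta, \cos\theta)$ are monotone in $x$ and $g$ is non-decreasing, whence $g(x) \le g(\pi/2)$; when $\theta \in (-\pi, -\pi/2)$ and $x \in (\theta, |\theta|)$, evenness and unimodality of $\cos$ yield $\cos x \ge \cos|\theta| = \cos\theta$, so $g(x) \le 0 = g(|\theta|)$. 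Continuity of $g$ on the resulting compact interval then replaces the supremum by a maximum.

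The main obstacle is the case analysis for $\theta \in (-\pi, \pi/2)$, where three sub-ranges of $\theta$ (separated by the sign of $\theta$ and whether $|\theta|$ exceeds $\pi/2$) must be matched to the correct endpoint $\pi/2$ or $|\theta|$ and to the correct concavity or evenness argument. In the other case, the delicate point is the period-$2\pi$ inductive verification that all non-principal local minima of $h$ remain non-negative, which relies on the uniform positivity of $\sin\theta$ throughout $[\pi/2, \pi)$.
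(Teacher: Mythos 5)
Your proposal is correct, and it reaches the two-case evaluation of $A_\theta$ by a genuinely different mechanism than the paper. The paper's central device is a level-set (reflection) argument: for each value $b<\cos\theta$ it observes that among all $x>\theta$ with $\cos x=b$ the secant slope $M_\theta(x)=(\cos x-\cos\theta)/(x-\theta)$ is most negative at the smallest such $x$, which necessarily lies in $[|\theta|,\pi]$; this compactifies the supremum in one stroke for every $\theta\in(-\pi,\pi/2)$ and, combined with convexity of $\cos$ on $[\theta,\pi]$, also disposes of the case $\theta\in[\pi/2,\pi)$ as a one-sided limit. You replace that device with direct flank estimates: the bound $\cos x\ge-1$ gives $g(x)\le(1+\cos\theta)/(x-\theta)\le g(\pi)$ on the right flank, and concavity (for $\theta\in[-\pi/2,\pi/2)$) or evenness plus monotonicity of $\cos$ on $[0,\pi]$ (for $\theta\in(-\pi,-\pi/2)$, where $g\le 0=g(|\theta|)$) handles the left flank; your concavity step is essentially the paper's Jensen step. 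For $\theta\in[\pi/2,\pi)$ you instead run a global critical-point analysis of $h(x)=\cos x-\cos\theta+\sin\theta\,(x-\theta)$ over the whole half-line, checking that the local minima at $x=\theta+2k\pi$ take the nonnegative values $2k\pi\sin\theta$; this is correct (the degenerate case $\theta=\pi/2$, where the local maximum at $3\pi-\theta$ and minimum at $\theta+2\pi$ coalesce, is harmless since $h'\ge0$ there) but it is exactly the infinite-tail bookkeeping that the paper's compactification avoids. The trade-off: your argument uses only elementary pointwise bounds and standard convexity, at the cost of the periodic induction; the paper's level-set argument is slicker and is reused implicitly in Corollary~\ref{Cor:cos_inequality} to pin down the uniqueness and location of the maximizer $\varphi$, which your route would have to re-derive separately.
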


 From now on, I use the notations $\varphi(\theta)$ or simply $\varphi$ to
 denote the unique $x$ maximizing the second line of Eq.~\eqref{E:A_phi_def}
 when $\theta < \pi/2$.  I also set $\varphi = \theta$ when $\theta \ge
 \pi/2$.

\begin{corollary}
 Inequality~\eqref{E:cos_inequality} can be rewritten as
 \begin{equation}
  \cos x \ge \cos\theta - (x - \theta) \sin\varphi(\theta) \text{~for all~}
  x \ge \theta
  \label{E:cos_inequality_refined}
 \end{equation}
 with equality holds only when $x = \theta$ or $\varphi$.  Moreover, the
 function
 \begin{equation}
  f_\theta(x) = \cos x - \cos\theta + (x-\theta) \sin\varphi(\theta) \ge 0
  \text{~for all~} x \ge \theta .
  \label{E:f_def}
 \end{equation}
 In fact, it has exactly two roots in the interval $[\theta,+\infty)$ provided
 that $\theta \in I_1 \equiv (-\pi,\pi/2)$.  They are a simple root at
 $\theta$ and a double root at $\varphi \in [\max(\pi/2,|\theta|),\pi]$,
 respectively.  Thus, for $\theta \in I_1$, there is a unique $x$ in
 $(\theta,+\infty)$ that maximizes $(\cos\theta - \cos x)/(x-\theta)$ in
 Eq.~\eqref{E:A_phi_def}.  Furthermore, this maximizing $x$ is in
 $[\max(\pi/2,|\theta|),\pi]$.  Whereas for $\theta \in I_2 \equiv
 [\pi/2,\pi)$, $f_\theta$ only has a double root at $\varphi \in
 [\theta,+\infty)$.  Last but not least, $\varphi$ is a decreasing (an
 increasing) function of $\theta \in I_1$ ($\theta \in I_2$), $\varphi -
 \theta$ is a decreasing function of $\theta \in I_1 \cup I_2$ and $\varphi +
 \theta$ is an increasing function of $\theta \in I_1 \cup I_2$.
 \label{Cor:cos_inequality}
\end{corollary}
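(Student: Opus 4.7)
The plan is to unpack Lemma~\ref{Lem:cos_inequality} by (i) identifying $A_\theta$ with $\sin\varphi(\theta)$, (ii) reading the zero structure of $f_\theta$ off of its first two derivatives, and (iii) obtaining the monotonicity statements by implicit differentiation. Set $g(x) \equiv (\cos\theta - \cos x)/(x-\theta)$. For $\theta \in I_2$ the convention $\varphi(\theta) = \theta$ together with Lemma~\ref{Lem:cos_inequality} immediately gives $A_\theta = \sin\theta = \sin\varphi(\theta)$. For $\theta \in I_1$, any interior maximizer $\varphi$ of $g$ satisfies $g'(\varphi)=0$, which rearranges to $(\varphi-\theta)\sin\varphi = \cos\theta - \cos\varphi$; dividing by $\varphi-\theta>0$ yields $\sin\varphi = g(\varphi) = A_\theta$. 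Substituting $A_\theta = \sin\varphi(\theta)$ into~\eqref{E:cos_inequality} produces~\eqref{E:cos_inequality_refined}, while~\eqref{E:f_def} is just the inequality $A_\theta \ge g(x)$ rewritten, with equality precisely at the maximizers of $g$.

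Next, for the roots of $f_\theta$ I would exploit $f_\theta'(x) = \sin\varphi - \sin x$ and $f_\theta''(x) = -\cos x$. Because $f_\theta \ge 0$ on $[\theta,+\infty)$, any zero strictly inside $(\theta,+\infty)$ is a local minimum, hence a critical point at which $f_\theta'' \ge 0$; the critical points satisfying both $\sin x = \sin\varphi$ and $\cos x \le 0$ reduce to $x = \varphi + 2\pi k$ for integers $k \ge 0$ (with the extra candidate $\pi - \varphi + 2\pi k$ only in the degenerate case $\varphi = \pi/2$). A direct computation $f_\theta(\varphi + 2\pi k) = 2\pi k \sin\varphi$ shows that only $k=0$ gives a zero, once one rules out $\sin\varphi = 0$ (which would force $\varphi = \pi$, incompatible with $\theta > -\pi$). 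So the unique interior zero of $f_\theta$ is the double root at $\varphi$. For $\theta \in I_2$ this coincides with the boundary, giving the lone double root claimed. For $\theta \in I_1$, the boundary value $f_\theta(\theta) = 0$ contributes an additional simple root because $f_\theta'(\theta) = \sin\varphi - \sin\theta > 0$: for $|\theta| < \pi/2$ strict positivity follows from the Taylor expansion $g(\theta+s) = \sin\theta + \tfrac{1}{2}\cos\theta\,s + O(s^2)$, which forces $g$ to leave $\sin\theta$ strictly increasing; and for $\theta \in (-\pi,-\pi/2]$ it is immediate from $\sin\varphi \ge 0 > \sin\theta$. Uniqueness of the maximizer on $(\theta,+\infty)$ is then the uniqueness of the interior zero of $f_\theta$, and its location in $[\max(\pi/2,|\theta|),\pi]$ is inherited from Lemma~\ref{Lem:cos_inequality}.

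For the monotonicity on $I_1$, implicit differentiation of $\cos\theta - \cos\varphi = (\varphi - \theta)\sin\varphi$ with respect to $\theta$ gives
\begin{equation*}
 \varphi'(\theta) = \frac{\sin\varphi - \sin\theta}{(\varphi - \theta)\cos\varphi} .
\end{equation*}
The numerator is nonnegative and the denominator is strictly negative, so $\varphi$ is decreasing on $I_1$, and consequently $(\varphi - \theta)' = \varphi' - 1 < 0$. For the increase of $\varphi + \theta$ I would pass to $\mu \equiv (\varphi+\theta)/2$ and $\nu \equiv (\varphi-\theta)/2$, in which the defining relation simplifies to $\tan\mu = \nu\sin\nu/(\sin\nu - \nu\cos\nu)$ with $\mu \in (0,\pi/2)$; a short calculation gives $d(\tan\mu)/d\nu = (\sin^2\nu - \nu^2)/(\sin\nu-\nu\cos\nu)^2 < 0$ for $\nu > 0$, so $\mu$ strictly decreases with $\nu$. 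Combined with $\nu' = (\varphi'-1)/2 < 0$ this gives $\mu' > 0$, i.e., $\varphi + \theta$ is increasing. The extension to $I_2$ follows by continuity of $\varphi$ at $\theta = \pi/2$ and the triviality $\varphi - \theta \equiv 0$, $\varphi + \theta = 2\theta$ there. The main obstacle I foresee is the sign-bookkeeping in the root-structure step, where several endpoint configurations ($\sin\varphi = 0$, $\varphi = \pi/2$, $\theta$ near $\pm\pi/2$) need to be handled separately to keep the simple-vs.-double-root classification intact.
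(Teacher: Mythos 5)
Your proposal is correct in substance and in two places takes a genuinely different route from the paper. For the root structure of $f_\theta$, the paper argues by Rolle-type counting ($f_\theta'$ has at most two zeros in an interval of length less than $2\pi$, so $f_\theta$ has at most three roots with multiplicity in $[\theta,\pi]$) and then separately shows $f_\theta>0$ for $x>\pi$ via the chord-slope comparison $M_\theta(x)>M_\theta(\varphi)$; your observation that every interior zero of the nonnegative function $f_\theta$ is a local minimum, hence solves $\sin x=\sin\varphi$ with $\cos x\le 0$, treats all of $[\theta,+\infty)$ in one stroke and is arguably cleaner. For the monotonicity of $\varphi+\theta$, the paper reduces $d\varphi/d\theta\ge -1$ to $\sin\bigl(\tfrac{3\varphi+\theta}{2}\bigr)\sin\bigl(\tfrac{\varphi-\theta}{2}\bigr)\ge 0$ and then bounds $3\varphi+\theta\in(0,2\pi)$, whereas you change variables to $\mu=(\varphi+\theta)/2$, $\nu=(\varphi-\theta)/2$ and use $\sin^2\nu<\nu^2$. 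Your Taylor-expansion derivation of $\sin\varphi>\sin\theta$ likewise replaces the paper's geometric chord/convexity argument; both are valid.

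The one step you must not leave as a bare assertion is ``$\mu\in(0,\pi/2)$''. Its upper half, $\varphi+\theta<\pi$, is precisely the nontrivial fact the paper spends a paragraph proving for $\theta\in(0,\pi/2)$, and without it the inference from ``$\tan\mu$ decreasing in $\nu$'' to ``$\mu$ decreasing in $\nu$'' fails, since $\mu$ must be confined to a single branch of $\tan$. Fortunately it follows from ingredients you already have: you proved $\sin\varphi>\sin\theta$, and since $\varphi\in(\pi/2,\pi]$, for $\theta\in(0,\pi/2)$ writing $\sin\varphi=\sin(\pi-\varphi)$ with $\pi-\varphi$ and $\theta$ both in $[0,\pi/2]$ gives $\pi-\varphi>\theta$, i.e.\ $\varphi+\theta<\pi$; for $\theta\le 0$ the bound $\varphi\le\pi\le\pi-\theta$ is immediate once one notes $\varphi\ne\pi$ (it would force $\cos\theta=-1$). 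The lower half, $\varphi+\theta>0$, needs a word only for $\theta\in(-\pi,-\pi/2]$, where Lemma~\ref{Lem:cos_inequality} gives $\varphi\ge|\theta|$ and strictness holds because $g(|\theta|)=0<A_\theta$. With these lines added the argument is complete; two smaller points worth a clause each are that $\varphi>\pi/2$ strictly on $I_1$ (so the denominator of Eq.~\eqref{E:dvarphi_dtheta} is genuinely negative rather than zero) and that $\sin\nu-\nu\cos\nu>0$ on $(0,\pi)$ since its derivative is $\nu\sin\nu>0$.
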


 Proofs of Lemma~\ref{Lem:cos_inequality} and
 Corollary~\ref{Lem:cos_inequality} can be found in the Appendix.  It is
 natural to apply the above Lemma and Corollary for all values of $\theta \in
 (-\pi,\pi)$ to derive a \QSL in Sec.~\ref{Subsec:ML_bound}.  However, in
 subsequent analysis, I find that only those bounds derived from the case of
 $\theta \in [-\pi/2,0]$ are strong enough to be useful.  More precisely,
 Lemma~\ref{Lem:cos_inequality} and Corollary~\ref{Cor:cos_inequality} with
 $\theta \in [-\pi/2,0]$ can be used to derive the \ML bound.  But no better
 \QSL bound can be obtained by considering $\theta$ outside this interval.

\begin{remark}
 As shown in Fig.~\ref{F:cos_inequality}, the geometric meaning of
 Lemma~\ref{Lem:cos_inequality} is that for $\theta \in (-\pi,\pi)$, the curve
 $y = \cos x$ is always above the line $L$ that meets this cosine curve at no
 more than two points, namely, $(\theta,\cos\theta)$ and $(\varphi,\cos
 \varphi)$ with $\varphi \ge \theta$ whenever $x$ is in the domain $[\theta,
 +\infty)$.  Furthermore, they meet tangentially at the latter point.
 Actually, Lemma~\ref{Lem:cos_inequality} is equivalent to
 Inequality~\eqref{E:cos_inequality_GLM} originally used in
 Refs.~\cite{PHYSCOMP96,ML98,GLM03}.  It is also a generalization of Lemma~$1$
 in Ref.~\cite{Chau10}.  The validity of Corollary~\ref{Cor:cos_inequality} is
 quite evident from Fig.~\ref{F:cos_inequality}.  Note further that
 Inequality~\eqref{E:cos_inequality} in Lemma~\ref{Lem:cos_inequality} is
 valid for $\theta \in (-\pi,\pi)$.  In contrast, Giovannetti \emph{et al.}
 only considered Inequality~\eqref{E:cos_inequality_GLM} with $q \ge 0$ in
 Ref.~\cite{GLM03}, which corresponds to the case of $\theta \in (-\pi,0]$ for
 Inequality~\eqref{E:cos_inequality}.
 \label{Rem:cos_inequality}
\end{remark}

\begin{figure}[t]
 \centering\includegraphics[width=7.5cm]{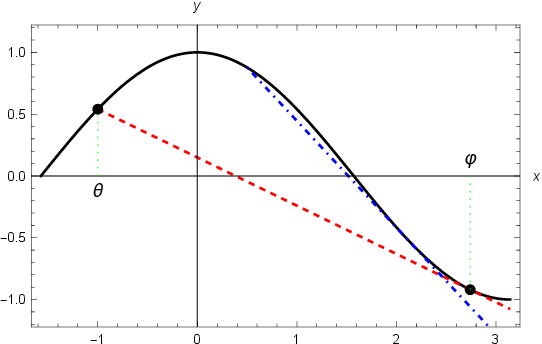}
 \caption{The black curve is $y = \cos x$.  The red dashed and blue
  dash-dotted line segments are the ones corresponding to $\theta = -1.0$ and
  $0.5$, respectively.  In particular, for $\theta = -1.0$, the line segment
  meets tangentially with the cosine curve at $\varphi \approx 2.74$.  From
  this graph, it is evident that as $\theta$ increases from $-\pi$ to $\pi/2$,
  the corresponding $\varphi \in [\pi/2,\pi]$ as well as $\varphi - \theta$
  decrease where as $\varphi + \theta$ increases.  These observations are
  stated in Corollary~\ref{Cor:cos_inequality} and proven in the Appendix.
  \label{F:cos_inequality}}
\end{figure}

\begin{corollary}
 Let $J$ be the interval $[-\pi/2,0]$.  Then
 \begin{subequations}
 \begin{equation}
  \{ \varphi(\theta) \colon \theta \in J \} \subset (\frac{\pi}{2},\pi) ,
  \label{E:varphi_range}
 \end{equation}
 \begin{equation}
  \{ \varphi(\theta) - \theta \colon \theta \in J \} \subset
  (\frac{\pi}{2},\frac{3\pi}{2})
  \label{E:varphi_minus_theta_range}
 \end{equation}
 and
 \begin{equation}
  \{ \varphi(\theta) + \theta \colon \theta \in J \} \subset (0,\pi) .
  \label{E:varphi_plus_theta_range}
 \end{equation}
 \end{subequations}
 \label{Cor:ranges}
\end{corollary}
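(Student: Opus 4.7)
The plan is to bootstrap from Corollary~\ref{Cor:cos_inequality}. Since $J \subset I_1$ and $|\theta| \le \pi/2$ throughout $J$, the bracket $[\max(\pi/2,|\theta|),\pi]$ collapses to $[\pi/2,\pi]$, so $\varphi(J) \subset [\pi/2,\pi]$. The remaining work is to upgrade the endpoints to strict ones and then to deduce~\eqref{E:varphi_minus_theta_range} and~\eqref{E:varphi_plus_theta_range} by elementary arithmetic together with the monotonicity already supplied by Corollary~\ref{Cor:cos_inequality}.

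For~\eqref{E:varphi_range}, I would rule out $\varphi = \pi$ and $\varphi = \pi/2$ directly from the double-root (tangency) condition $f_\theta(\varphi) = 0$, which reads
\[
 (\varphi-\theta)\sin\varphi = \cos\theta - \cos\varphi .
\]
Substituting $\varphi = \pi$ forces $\cos\theta = -1$, impossible for $\theta \in (-\pi,\pi)$. Substituting $\varphi = \pi/2$ gives $\pi/2 - \theta - \cos\theta = 0$. Setting $h(\theta) = \pi/2 - \theta - \cos\theta$, one has $h(\pi/2) = 0$ and $h'(\theta) = \sin\theta - 1 < 0$ strictly on $I_1$, so $h$ is strictly decreasing on $I_1$ and $h(\theta) > h(\pi/2) = 0$ there, excluding $\varphi = \pi/2$.

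For~\eqref{E:varphi_minus_theta_range}, combining $\varphi > \pi/2$ with $\theta \le 0$ yields $\varphi - \theta > \pi/2$, while $\varphi < \pi$ with $\theta \ge -\pi/2$ yields $\varphi - \theta < 3\pi/2$. For~\eqref{E:varphi_plus_theta_range}, the upper bound $\varphi + \theta < \pi$ follows at once from $\varphi < \pi$ and $\theta \le 0$; for the positivity $\varphi + \theta > 0$ I would invoke the monotonicity of $\varphi + \theta$ in $\theta$ from Corollary~\ref{Cor:cos_inequality} to reduce to the single point $\theta = -\pi/2$, where $\varphi(-\pi/2) - \pi/2 > 0$ by the strict lower bound in~\eqref{E:varphi_range}.

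The only non-routine step is the strictness argument; everything else is arithmetic once Corollary~\ref{Cor:cos_inequality} is in hand. The subtle point I would be careful about is that $\pi/2 - \theta = \cos\theta$ does admit the solution $\theta = \pi/2$, which is precisely why the analysis of $h$ has to be carried out on the full interval $I_1$ (exploiting $h(\pi/2) = 0$ as the boundary value) rather than only on $J$ itself.
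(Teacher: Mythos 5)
Your proof is correct, and it follows the same skeleton as the paper's: everything is bootstrapped from Lemma~\ref{Lem:cos_inequality} and Corollary~\ref{Cor:cos_inequality}. The difference is one of completeness rather than of route. The paper's own proof of Corollary~\ref{Cor:ranges} is a two-sentence assertion, and Eq.~\eqref{E:A_phi_def} by itself only places $\varphi(\theta)$ in the \emph{closed} interval $[\pi/2,\pi]$ for $\theta\in J$; the open-interval claims \eqref{E:varphi_range}--\eqref{E:varphi_plus_theta_range} require the endpoints to be excluded, and the paper never says how. Your tangency-condition argument supplies exactly that: $f_\theta(\pi)=0$ forces $\cos\theta=-1$, impossible on $(-\pi,\pi)$, and $f_\theta(\pi/2)=0$ forces $\pi/2-\theta=\cos\theta$, which your monotone function $h(\theta)=\pi/2-\theta-\cos\theta$ (with $h(\pi/2)=0$ and $h'<0$ on $I_1$) rules out. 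Both computations check out, and the arithmetic deductions of \eqref{E:varphi_minus_theta_range} and \eqref{E:varphi_plus_theta_range} from the strict bounds are fine. Two minor simplifications are available: for $\theta\in J$ the exclusion of $\varphi=\pi/2$ is immediate from $\pi/2-\theta\ge\pi/2>1\ge\cos\theta$, so the full analysis of $h$ on $I_1$ is only needed if one wants the exclusion on all of $I_1$; and once $\varphi>\pi/2$ is strict, the positivity $\varphi+\theta>\pi/2+\theta\ge 0$ follows directly from $\theta\ge-\pi/2$, so the appeal to the monotonicity of $\varphi+\theta$ is dispensable. Neither remark affects correctness.
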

\begin{proof}
 Eq.~\eqref{E:varphi_range} is a directly consequence of
 Eq.~\eqref{E:A_phi_def} in Lemma~\ref{Lem:cos_inequality}.
 Eqs.~\eqref{E:varphi_minus_theta_range} and~\eqref{E:varphi_plus_theta_range}
 follows from Corollary~\ref{Cor:cos_inequality} and
 Eq.~\eqref{E:varphi_range}.
\end{proof}

\subsection{A New Proof Of The \ML Bound And A New Expression Of
 $\boldsymbol{\alpha_{<}(\epsilon)}$}
\label{Subsec:ML_bound}

 I use the following notations.  Every time-independent Hamiltonian is
 formally written as $\sum_j E_j \ket{E_j} \bra{E_j}$ with $\ket{E_j}$'s being
 the normalized energy eigenstates of the Hamiltonian and $E_0$ being the
 ground state energy.  Furthermore, a normalized initial pure state
 $\ket{\Psi(0)}$ is formally written as $\sum_j a_j \ket{E_j}$.

\begin{theorem}[\ML bound]
 The evolution time $\tau$ needed for any quantum state to evolve to another
 state whose fidelity between them is $\epsilon$ under a time-independent
 Hamiltonian obeys
 \begin{equation}
  \frac{\tau}{\hbar} \ge \max_{\theta\in K_\epsilon} \frac{\cos\theta -
  \sqrt{\epsilon}}{\braket{E-E_0} \sin\varphi(\theta)} \equiv
  \frac{\pi\alpha_{<}(\epsilon)}{2\braket{E-E_0}} ,
  \label{E:ML_bound}
 \end{equation}
 where $K_\epsilon = [-\cos^{-1}(\sqrt{\epsilon}),0]$, $\varphi(\theta)$ is
 the unique root of Eq.~\eqref{E:f_def} in the interval
 $[\max(\pi/2,|\theta|),\pi]$, and $\braket{E-E_0}$ is the expectation value
 of the energy of the system relative to the ground state energy of the
 Hamiltonian.  (Note that denominator of the R.H.S. of
 Inequality~\eqref{E:ML_bound} vanishes if the initial state is a ground state
 of the Hamiltonian.  In this case, Inequality~\eqref{E:ML_bound} still holds
 if one interprets its R.H.S. as $0$ if $\epsilon = 1$ and $+\infty$
 otherwise.)  Last but not least, the $\theta$ maximizing the R.H.S. of
 Inequality~\eqref{E:ML_bound} is unique.
 \label{Thrm:ML_bound}
\end{theorem}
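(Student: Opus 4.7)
The plan is to reduce the \ML lower bound to a single scalar inequality for the survival amplitude $\braket{\Psi(0)|\Psi(\tau)}$ and then to deduce that inequality by applying Corollary~\ref{Cor:cos_inequality} term by term in the energy eigenbasis.

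The first step is to reduce to pure states. Given $\rho$ evolving to $\rho'$ with $F(\rho,\rho')=\epsilon$ under a Hamiltonian $H$, I purify $\rho$ to a state $\ket{\Psi(0)}$ on an enlarged Hilbert space and let $H$ act trivially on the ancilla. Both $\tau$ and $\braket{E-E_0}$ are preserved, while Uhlmann's theorem yields $|\braket{\Psi(0)|\Psi(\tau)}|^{2}\le\epsilon$. Because the bound I will derive is a decreasing function of the overlap, proving it for the pure pair whose survival amplitude has modulus exactly $\sqrt{\epsilon}$ implies the mixed-state version. Adding a scalar multiple of the identity to $H$ (which alters neither $\braket{E-E_0}$ nor the trajectory) then lets me set $E_0=0$, so $\omega_j\equiv E_j\tau/\hbar\ge 0$ for every $j$.

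The second step is the analytic heart. Writing $\ket{\Psi(0)}=\sum_j a_j\ket{E_j}$ one has $p(\tau)\equiv\braket{\Psi(0)|\Psi(\tau)}=\sum_j|a_j|^{2}e^{-i\omega_j}$, whence
\begin{equation*}
 \sqrt{\epsilon}=|p(\tau)|\ge\operatorname{Re}\bigl[e^{i\gamma}p(\tau)\bigr]=\sum_j|a_j|^{2}\cos(\omega_j-\gamma)
\end{equation*}
for every real $\gamma$. I would then invoke Corollary~\ref{Cor:cos_inequality} termwise with $x=\omega_j-\gamma$ and a parameter $\theta\in J=[-\pi/2,0]$: provided $\gamma\le-\theta$, the hypothesis $\omega_j-\gamma\ge\theta$ is met for every $j$ (the binding constraint coming from $\omega_0=0$), so summing the pointwise inequality with weights $|a_j|^{2}$ gives
\begin{equation*}
 \sqrt{\epsilon}\ge\cos\theta-\Bigl(\tfrac{\tau\braket{E-E_0}}{\hbar}-\gamma-\theta\Bigr)\sin\varphi(\theta).
\end{equation*}
Since $\sin\varphi(\theta)>0$ on $J$ by Corollary~\ref{Cor:ranges}, this rearranges to a lower bound on $\tau\braket{E-E_0}/\hbar$ still carrying the two free parameters $\gamma$ and $\theta$.

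The third step is to optimize. The coupling $\gamma\le-\theta$ is saturated by setting $\gamma=-\theta$, which annihilates the $\gamma+\theta$ correction and collapses the bound to $\tau\braket{E-E_0}/\hbar\ge(\cos\theta-\sqrt{\epsilon})/\sin\varphi(\theta)$. Restricting $\theta$ to $[-\cos^{-1}\sqrt{\epsilon},0]$ keeps the numerator non-negative, and maximizing over this interval gives exactly the right-hand side of Eq.~\eqref{E:ML_bound}; the parenthetical ground-state case is immediate because then $\rho(\tau)=\rho(0)$ and $\epsilon=1$. I expect the main obstacle not to be any single manipulation but the \emph{bookkeeping}: the global phase $\gamma$, absorbing the arbitrary overall sign of $p(\tau)$, and the Corollary~\ref{Cor:cos_inequality} parameter $\theta$, tuning the tangent line, must be introduced simultaneously and are linked by $\gamma\le-\theta$, with the clean maximand of the theorem emerging only at the saturating choice $\gamma=-\theta$.
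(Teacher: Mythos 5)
Your proof is correct and follows essentially the same route as the paper's: purify to reduce to pure states, lower-bound $\sqrt{\epsilon}=|\braket{\Psi(0)|\Psi(\tau)}|$ by the real part of a suitably phase-rotated survival amplitude, apply Corollary~\ref{Cor:cos_inequality} termwise in the energy eigenbasis (valid because $(E_j-E_0)\tau/\hbar\ge 0$), and optimize over $\theta$. Your auxiliary phase $\gamma$ is redundant --- the paper fixes the rotation $e^{iE_0\tau/\hbar}e^{-i\theta}$ (i.e., $\gamma=-\theta$) from the outset --- and your saturating choice $\gamma=-\theta$ reproduces the paper's argument exactly.
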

\begin{proof}
 I only need to prove this theorem for pure initial states.  If the initial
 state is mixed, then one just needs to consider the evolution of the purified
 state in the extended Hilbert space~\cite{GLM03}.

 For any fixed $\theta\in (-\pi,\pi)$, using the notations stated at the
 beginning of this Subsection and by Corollary~\ref{Cor:cos_inequality}, I
 obtain
\begin{align}
 \sqrt{\epsilon} &= \left| \braket{\Psi(0) | \Psi(\tau)} \right| \ge
  \Re \left( \braket{\Psi(0) | \Psi(\tau)} e^{i E_0 \tau / \hbar} e^{-i\theta}
  \right) \nonumber \\
 &= \sum_j |a_j|^2 \cos \left[ \frac{(E_j - E_0) \tau}{\hbar} + \theta \right]
  \nonumber \\
 &\ge \sum_j |a_j|^2 \left[ \cos\theta - \frac{(E_j - E_0) \tau \sin
  \varphi(\theta)}{\hbar} \right] \nonumber \\
 &= \cos\theta - \frac{\braket{E-E_0}\tau\sin\varphi(\theta)}{\hbar} .
 \label{E:fidelity1}
\end{align}
 Here $\Re(\cdot)$ denotes the real part of its argument.  Therefore,
\begin{equation}
 \frac{\tau}{\hbar} \ge \sup_{\theta\in (-\pi,\pi)} \frac{\cos\theta -
 \sqrt{\epsilon}}{\braket{E-E_0} \sin\varphi(\theta)}
 \label{E:ML_bound_alt}
\end{equation}
 provided that $\braket{E-E_0} > 0$.  If $\braket{E-E_0} = 0$, $\ket{\Psi(0)}$
 is a ground state of the Hamiltonian.  In this case,
 Inequality~\eqref{E:fidelity1} still holds according to the convention stated
 in this Theorem.  Furthermore, $\braket{E-E_0}$ can never be negative.

 From Eq.~\eqref{E:varphi_range} in Corollary~\ref{Cor:ranges}, $\sin\varphi >
 0$.  Thus, I may exclude those $\theta$'s with $\cos\theta - \sqrt{\epsilon}
 < 0$ from the supremum calculation in the R.H.S. of
 Inequality~\eqref{E:ML_bound_alt}.  That is to say, I only need to consider
 those $\theta \in [-\cos^{-1}(\sqrt{\epsilon}),\cos^{-1}(\sqrt{\epsilon})]
 \subset [-\pi/2,\pi/2]$.  In this domain, Corollary~\ref{Cor:cos_inequality}
 demands $1/\sin\varphi$ to be a decreasing function of $\theta$.  Combined
 with the fact that $\cos\theta - \sqrt{\epsilon}$ is an even function and
 that the R.H.S. of Inequality~\eqref{E:ML_bound_alt} is continuous for
 $\theta\in [-\cos^{-1}(\sqrt{\epsilon}),\cos^{-1}(\sqrt{\epsilon})]$, I
 conclude that the supremum in the R.H.S. of Inequality~\eqref{E:ML_bound_alt}
 can be replaced by maximum over $\theta \in K_\epsilon =
 [-\cos^{-1}(\sqrt{\epsilon}),0]$.  Therefore, I obtain
 Inequality~\eqref{E:ML_bound}.

 Recall from the proof of Corollary~\ref{Cor:cos_inequality} that $\varphi$ is
 a differentiable function of $\theta$.  From Eqs.~\eqref{E:f_def}
 and~\eqref{E:dvarphi_dtheta},
 \begin{align}
  & \frac{d}{d\theta} \left( \frac{\cos\theta - \sqrt{\epsilon}}{\sin\varphi}
   \right) = 0 \nonumber \\
  \Longleftrightarrow{}& (\cos\theta - \sqrt{\epsilon}) (\sin\varphi -
   \sin\theta) = (\cos \varphi - \cos\theta) \sin\theta \nonumber \\
  \Longleftrightarrow{}& \cos\frac{\varphi - \theta}{2} = \sqrt{\epsilon}
   \cos\frac{\varphi + \theta}{2} .
  \label{E:theta_opt_aux}
 \end{align}
 Since $\theta \in K_\epsilon \subset I_1$,
 Corollaries~\ref{Cor:cos_inequality} and~\ref{Cor:ranges} demand that the
 L.H.S. and R.H.S. of the last line of Eq.~\eqref{E:theta_opt_aux} are
 increasing and decreasing functions of $\theta \in I_1$, respectively.
 Therefore, the $\theta \in K_\epsilon$ that maximizes the R.H.S. of
 Inequality~\eqref{E:ML_bound} must be unique.
\end{proof}

\begin{remark}
 The expression of $\alpha_{<}(\epsilon)$ in the R.H.S. of
 Inequality~\eqref{E:ML_bound} is equivalent to that of
 Eq.~\eqref{E:alpha_lower_bound} originally obtained by Giovannetti
 \emph{et al.} in Ref.~\cite{GLM03}.  In fact, they can be transformed from
 one to another via the equations $q = -\tan\theta$ and $\theta = \phi$
 relating the optimized $\theta, \phi$ and $q$.  Besides,
 Eq.~\eqref{E:cos_inequality_GLM} can be simplified as $m \cos\theta = \sin
 \varphi$.  From its proof, it is straightforward to see that the \ML bound in
 Theorem~\ref{Thrm:ML_bound} can slightly strengthen to
 \begin{equation}
  \frac{\tau}{\hbar} \ge \max_{\theta\in K_\epsilon} \frac{\cos\theta -
  \sqrt{\epsilon}}{\braket{E - \underline{E}} \sin\varphi(\theta)}
  \label{E:ML_bound_improved}
 \end{equation}
 where $\underline{E} = \esssup \{ E \colon \sum_{E_j < E} |a_j|^2 = 0 \}$.
 In fact, Inequality~\eqref{E:ML_bound_improved} had been reported in
 Ref.~\cite{HS23}.
 \label{Rem:new_alpha_lower}
\end{remark}

\subsection{$\boldsymbol{\alpha_{>}(\epsilon) = \alpha_{<}(\epsilon)}$}
\label{Subsec:equality_of_alpha}
 From now on, I denote the $\theta$ that maximizes the R.H.S. of
 Inequality~\eqref{E:ML_bound} by $\theta_\text{opt}$.  Moreover, I denote
 $\varphi(\theta_\text{opt})$ by $\varphi_\text{opt}$.

\begin{theorem}
 For each $\epsilon \in [0,1]$, there exists a pair of (pure) quantum state
 and time-independent Hamiltonian saturating the \ML bound in
 Theorem~\ref{Thrm:ML_bound}.  In fact, for $\epsilon = 1$, any quantum state
 and Hamiltonian pair can saturate the \ML bound.  For $\epsilon \in [0,1)$,
 an initial (pure and normalized) quantum state $\ket{\Psi(0)}$ and a
 Hamiltonian pair saturate the \ML bound if and only if
 \begin{equation}
  \ket{\Psi(0)} = a_0 \ket{E_0} + a_1 \ket{E_1}
  \label{E:Psi_optimal_form}
 \end{equation}
 with $E_1 > E_0$, $(E_1-E_0)\tau/\hbar = \varphi_\text{opt} -
 \theta_\text{opt} > 0$ and
 \begin{equation}
  |a_1|^2 = \frac{\sin \left| \theta_\text{opt} \right|}{2 \sin \left(
  \frac{\varphi_\text{opt} - \theta_\text{opt}}{2} \right) \cos \left(
  \frac{\varphi_\text{opt} + \theta_\text{opt}}{2} \right)} .
  \label{E:a1_value}
 \end{equation}
 (Note that here the required time-independent Hamiltonian $H$ appears
 implicitly via its energy eigenstates in Eq.~\eqref{E:Psi_optimal_form}.
 Explicitly, $H = E_0 \ket{E_0}\bra{E_0} + E_1 \ket{E_1}\bra{E_1} + H'$ where
 $H' \ge E_0$ is a time-independent Hamiltonian whose support equals the
 orthogonal complement of the span of $\ket{E_0}$ and $\ket{E_1}$.  Note
 further that although $H'$ does not affect the evolution of $\ket{\Psi(0)}$,
 the requirement $H' \ge E_0$ is essential though technical.  This is because
 $H'$ makes the \ML bound sub-optimal by shifting the ground state energy if
 it has an eigenvalue less than $E_0$.)  Thus, $\alpha(\epsilon) =
 \alpha_{<}(\epsilon)$.
 \label{Thrm:ML_saturation}
\end{theorem}
\begin{proof}
 For $\epsilon = 1$, Inequality~\eqref{E:ML_bound} becomes $\tau \ge 0$.
 Since $\epsilon = F(\ket{\Psi(0)},\ket{\Psi(0)}) = 1$, this inequality is
 just an equality for any initial state $\ket{\Psi(0)}$ under the action of
 any Hamiltonian.

 In the remaining proof, $\epsilon$ is assumed to be in $[0,1)$.  By
 Corollary~\ref{Cor:cos_inequality} and the proof of
 Theorem~\ref{Thrm:ML_bound}, the necessary and sufficient conditions for a
 pair of quantum state $\ket{\Psi(0)}$ and time-independent Hamiltonian to
 saturate the \ML bound in Inequality~\eqref{E:ML_bound} are:
 \begin{enumerate}
  \item \label{Item:theta_opt} the $\theta$ in Inequality~\eqref{E:fidelity1}
   is equal to $\theta_\text{opt} \in K_\epsilon$; and
  \item this $\theta = \theta_\text{opt}$ together with $\ket{\Psi(0)}$ also
   turn the Inequality~\eqref{E:fidelity1} into an equality.
 \end{enumerate}

 Using $|z| = \max_{\theta\in{\mathbb R}} \Re(z e^{-i\theta})$ of all $z
 \in {\mathbb C}$, a trick first used in \QSL research in Ref.~\cite{U92},
 the first line of Inequality~\eqref{E:fidelity1} becomes an equality if and
 only if
 \begin{enumerate}
  \item $\braket{\Psi(0)|\Psi(\tau)} \ne 0$ and $\theta = \theta_\text{opt} =
   \arg(\braket{\Psi(0)|\Psi(\tau)}$ $e^{i E_0 \tau / \hbar}) \bmod 2\pi$; or
  \item $\braket{\Psi(0)|\Psi(\tau)} = 0$ and $\theta = \theta_\text{opt}$ can
   be any real number.
 \end{enumerate}
 Note that $\tau > 0$ as $\epsilon < 1$.  For each term in the second line of
 Inequality~\eqref{E:fidelity1}, I set $x = (E_j - E_0)\tau/\hbar + \theta$
 and apply Corollary~\ref{Cor:cos_inequality} to it.  In this way, I know that
 the third line of Inequality~\eqref{E:fidelity1} becomes an equality if and
 only if $a_j = 0$ whenever $E_j \ge E_0$ and $(E_j - E_0)\tau/\hbar -
 \theta_\text{opt} \notin \{ \theta_\text{opt}, \varphi_\text{opt} \}$.  In
 other words, the (normalized) initial state must be in the form of
 Eq.~\eqref{E:Psi_optimal_form} with $E_1 > E_0$ and $(E_1-E_0)\tau/\hbar =
 \varphi_\text{opt} - \theta_\text{opt} > 0$.  To conclude, for a given
 $\epsilon \in [0,1)$, the initial state $\ket{\Psi(0)}$ saturating the \ML
 bound are the ones given by Eq.~\eqref{E:Psi_optimal_form} with
 $\theta_\text{opt} \in K_\epsilon \subset [-\pi/2,0]$.  And the corresponding
 Hamiltonian is the one with $(E_1 - E_0)\tau/\hbar = \varphi_\text{opt} -
 \theta_\text{opt}$.

 Recall that $\theta_\text{opt}$ must also equal the argument of
 $\braket{\Psi(0)|\Psi(\tau)} e^{i E_0 \tau / \hbar} \bmod 2\pi$ if
 $\braket{\Psi(0)|\Psi(\tau)} \ne 0$.  Thus, $a_1$ in
 Eq.~\eqref{E:Psi_optimal_form} obeys
 \begin{align}
  \displaybreak[1]
  \tan\theta_\text{opt} &= \frac{\Im \left( \braket{\Psi(0)|\Psi(\tau)} e^{i
   E_0 \tau / \hbar} \right)}{\Re \left( \braket{\Psi(0)|\Psi(\tau)} e^{i E_0
   \tau / \hbar} \right)} \nonumber \\
  &= \frac{-|a_1|^2 \sin(\varphi_\text{opt} - \theta_\text{opt})}{|a_0|^2 +
   |a_1|^2 \cos(\varphi_\text{opt} - \theta_\text{opt})} \nonumber \\
  &= \frac{-|a_1|^2 \sin(\varphi_\text{opt} - \theta_\text{opt})}{1 - |a_1|^2
   [1 - \cos(\varphi_\text{opt} - \theta_\text{opt})]}
  \label{E:a1}
 \end{align}
 where $\Im(\cdot)$ is the imaginary part of its argument.  By changing
 $|a_1|^2$ as the subject, I get
 \begin{align}
  |a_1|^2 &= \frac{\tan\theta_\text{opt}}{\tan\theta_\text{opt} [1 - \cos
   (\varphi_\text{opt} - \theta_\text{opt})] - \sin (\varphi_\text{opt} -
   \theta_\text{opt})} \nonumber \\
  &= \frac{\sin \left| \theta_\text{opt} \right|}{2 \sin \left(
   \frac{\varphi_\text{opt} - \theta_\text{opt}}{2} \right) \cos \left(
   \frac{\varphi_\text{opt} + \theta_\text{opt}}{2} \right)} ,
  \label{E:a1_value1}
 \end{align}
 which is Eq.~\eqref{E:a1_value}.  Although Eq.~\eqref{E:a1} is ill-defined
 when its denominator is zero, Eq.~\eqref{E:a1_value} is well-defined and
 correct in all cases.  Specifically, the first case for Eq.~\eqref{E:a1} to
 be ill-defined is that $\theta_\text{opt} = -\pi/2 \bmod 2\pi$.  Then, the
 vanishing denominator of Eq.~\eqref{E:a1} simplifies to $|a_1|^2 = [2\sin^2
 (\varphi_\text{opt}/2 + \pi/4)]^{-1}$.  It is straightforward to check that
 this expression reduces to Eq.~\eqref{E:a1_value}.  The other case of concern
 is when $\sqrt{\epsilon} = \braket{\Psi(0)|\Psi(\tau)} = 0$.  This case
 reduces to $\varphi_\text{opt} - \theta_\text{opt} = \pi$ and $|a_1|^2 = 1/2$
 by Eq.~\eqref{E:varphi_minus_theta_range}.  And from
 Corollary~\ref{Cor:cos_inequality}, I know that $\cos\varphi_\text{opt} =
 \cos\theta_\text{opt} - \pi \sin\varphi_\text{opt}$.  This gives $\tan
 \varphi_\text{opt} = \tan\theta_\text{opt} = -2/\pi$.  Eliminating
 $\varphi_\text{opt}$ from Eq.~\eqref{E:a1_value} and using the fact that
 $\sin\theta_\text{opt} \ne 0$, I obtain $|a_1|^2 = 1/2$.  This concludes that
 Eq.~\eqref{E:a1_value} is valid in all cases.

 Finally, I need to check that $\ket{\Psi(0)}$ is a valid state by proving
 $|a_1|^2 \in [0,1]$.  Here I prove a slightly stronger result that $|a_1|^2
 \in [0,1/2]$.  Eq.~\eqref{E:a1_value} implies that $|a_1|^2 > 0$.
 Furthermore, showing $|a_1|^2 \le 1/2$ is equivalent to proving
 \begin{widetext}
 \begin{align}
  & \quad -\sin\theta_\text{opt} = \sin \left| \theta_\text{opt} \right| \le
   \sin \left( \frac{\varphi_\text{opt} - \theta_\text{opt}}{2} \right) \cos
   \left( \frac{\varphi_\text{opt} + \theta_\text{opt}}{2} \right) =
   \frac{1}{2} ( \sin\varphi_\text{opt} - \sin\theta_\text{opt}) \nonumber \\
  \Longleftrightarrow & \quad \sin\varphi_\text{opt} + \sin\theta_\text{opt} =
   2 \sin \left( \frac{\varphi_\text{opt} + \theta_\text{opt}}{2} \right) \cos
   \left( \frac{\varphi_\text{opt} - \theta_\text{opt}}{2} \right) \ge 0 .
  \label{E:upper_bound_of_a1_squ_condition}
 \end{align}
 \end{widetext}
 From Eqs.~\eqref{E:varphi_minus_theta_range}
 and~\eqref{E:varphi_plus_theta_range} in Corollary~\ref{Cor:ranges}, I
 conclude that the last line of
 Inequality~\eqref{E:upper_bound_of_a1_squ_condition} is true.  In other
 words, $\ket{\Psi(0)}$ is a valid normalized initial quantum state saturating
 the \ML bound.  This completes the proof.
\end{proof}

\begin{remark}
 Theorem~\ref{Thrm:ML_saturation} can be used to derive the following result
 reported in Ref.~\cite{HS23}.  For any $\epsilon \in [0,1]$ and any initial
 normalized pure state $\ket{\Psi(0)}$, there is a time-independent
 Hamiltonian $H$ acting on a Hilbert space of dimension at least~2 such that
 $|\braket{\Psi(0)|\Psi(\tau)}|^2 = \epsilon$ with $\tau$ saturating the \ML
 bound.  Likewise, for any $\epsilon \in [0,1]$ and any time-independent
 Hamiltonian $H$ that is not proportional to the identity operator, there is a
 normalized initial state $\ket{\Psi(0)}$ such that
 $|\braket{\Psi(0)|\Psi(\tau)}|^2 = \epsilon$ with $\tau$ saturating the \ML
 bound.  The proof is simple.  As the two settings are trivially true for
 $\epsilon = 1$, I only need to consider the case of $\epsilon \in [0,1)$.
 For the first setting, once $\epsilon$ is given, $\theta_\text{opt}$ is
 fixed.  One chooses $\tau = 1$ and picks $E_0$ and $E_1$ satisfying the
 constraints in Theorem~\ref{Thrm:ML_saturation}.  Moreover, one selects an
 arbitrary but fixed normalized state $\ket{\Phi}$ orthogonal to
 $\ket{\Psi(0)}$.  Let $\ket{E_0} = a_0 \ket{\Psi(0)} + a_1 \ket{\Phi}$ and
 $\ket{E_1} = a_1 \ket{\Psi(0)} - a_0 \ket{\Phi}$, where $a_j$'s are
 non-negative real numbers with $a_1$ obeying Eq.~\eqref{E:a1_value}.  Then,
 it is easy to check that $\ket{\Phi(0)}$ satisfies
 Eq.~\eqref{E:Psi_optimal_form} and $H = E_0 \ket{E_0} \bra{E_0} + E_1
 \ket{E_1} \bra{E_1}$ is the required time-independent Hamiltonian.  For the
 second setting, since $H$ is not proportional to the identity operator, it
 has at least two distinct eigenenergies, say the ground state energy $E_0$
 and an excited state energy $E_1$.  Surely, for any fixed $\epsilon \in
 [0,1)$, one can find $\tau$ making $E_0, E_1$ to satisfy the constraints in
 Theorem~\ref{Thrm:ML_saturation}.  Clearly, the normalized state in the form
 of Eq.~\eqref{E:Psi_optimal_form} with probability amplitude $a_1$ satisfying
 Eq.~\eqref{E:a1_value} is the required initial pure quantum state.
 \label{Rem:HS_comparison}
\end{remark}

\begin{corollary}
 $\alpha_{>}(\epsilon) = \alpha_{<}(\epsilon)$ where
 \begin{align}
  \alpha_{>}(\epsilon) &= \min_{|a_1|^2 \in
   [\frac{1-\sqrt{\epsilon}}{2},\frac{1}{2}]} \frac{4 |a_1|^2}{\pi} \sin^{-1}
   \sqrt{\frac{1-\epsilon}{4 |a_1|^2 \left( 1 - |a_1|^2 \right)}} \nonumber \\
  &= \min_{\mu \in [\sin^{-1}\sqrt{1-\epsilon},\frac{\pi}{2}]} \frac{2\mu
   \left[ 1 - \sqrt{1 - (1-\epsilon) \csc^2 \mu} \right]}{\pi} .
  \label{E:alpha_upper_expression}
 \end{align}
 Here $|a_1|^2$ and $\mu$ are related by
 \begin{equation}
  \mu = \sin^{-1} \sqrt{\frac{1-\epsilon}{4 |a_1|^2 \left( 1 - |a_1|^2
  \right)}} .
  \label{E:a1_to_mu}
 \end{equation}
 \label{Cor:alpha_lower_equals_alpha_upper}
\end{corollary}
\begin{proof}
 The proof of Theorem~\ref{Thrm:ML_saturation} clearly shows that a state
 $\ket{\Psi(0)}$ saturating the \ML bound must be in the form of
 Eq.~\eqref{E:Psi_optimal_form}.  Besides, the optimality condition depends
 on the magnitude rather than the phase of $a_1$.  Since for a given $\epsilon
 \in [0,1]$, the values of the optimal $\theta = \theta_\text{opt}$ and the
 corresponding $\varphi = \varphi_\text{opt}$ are fixed.  So from
 Eq.~\eqref{E:a1_value}, for a given $\epsilon \in [0,1]$, there is only one
 $|a_1|^2 \in [0,1/2]$ that makes the state $\ket{\Psi(0)}$ saturating the \ML
 bound.  As
 \begin{align}
  \epsilon &= \left| \braket{\Psi(0)|\Psi(\tau)} \right|^2 \nonumber \\
  &= 1 - 4 |a_1|^2 \left( 1 - |a_1|^2 \right) \sin^2 \left[ \frac{(E_1 - E_0)
   \tau}{2\hbar} \right] ,
  \label{E:epsilon_a1_relation}
 \end{align}
 I conclude that
 \begin{equation}
  \frac{\tau}{\hbar} = \min_{|a_1|^2 \in [0,1/2]} \frac{2}{E_1 - E_0}
  \sin^{-1} \sqrt{\frac{1-\epsilon}{4 |a_1|^2 \left( 1 - |a_1|^2 \right)}}
  \label{E:state_saturation_bound}
 \end{equation}
 as long as $1-\epsilon \le 4 |a_1|^2 (1 - |a_1|^2)$ or equivalently
 $(1-\sqrt{\epsilon})/2 \le |a_1|^2 \le (1+\sqrt{\epsilon})/2$.  Note that if
 $1 - \epsilon > 4|a_1|^2 (1 - |a_1|^2)$, Eq.~\eqref{E:epsilon_a1_relation}
 has no real-valued solution for $\tau$.  Therefore, the corresponding value
 of $|a_1|^2$ can be excluded from the $\alpha_{>}(\epsilon)$ calculation.
 Since $\braket{E-E_0} = |a_1|^2 E_1$ for $\ket{\Psi(0)}$, I prove the
 validity of the first equality in Eq.~\eqref{E:alpha_upper_expression}.

 Since Eq.~\eqref{E:a1_to_mu} is a bijection from $|a_1|^2 \in
 [(1-\sqrt{\epsilon})/2,1/2]$ to $\mu \in [\sin^{-1}\sqrt{1-\epsilon},\pi/2]$
 whenever $\epsilon < 1$, the last equality in
 Eq.~\eqref{E:alpha_upper_expression} is correct if $\epsilon \in [0,1)$.
 Finally, $\alpha_{>}(0) = 0$ according to the last line of
 Eq.~\eqref{E:alpha_upper_expression}.  So,
 Eq.~\eqref{E:alpha_upper_expression} is also true when $\epsilon = 1$.
\end{proof}

\begin{remark}
 Actually, the first line of Eq.~\eqref{E:alpha_upper_expression} is equal to
 the expression of $\alpha_{>}(\epsilon)$ originally reported in
 Ref.~\cite{GLM03} and reproduced as Eq.~\eqref{E:alpha_upper_bound} in this
 paper.  To show this fact, I let $z = |a_1|^2$.  Then, my claim is correct
 if
 \begin{equation}
  2 \sin^{-1} \sqrt{\frac{1 - \epsilon}{4 z (1-z)}} = \cos^{-1} \left[ 1 -
  \frac{1-\epsilon}{2z(1-z)} \right] .
  \label{E:remark_equiv}
 \end{equation}
 Note that the values of arc sine and arc cosine in Eq.~\eqref{E:remark_equiv}
 are in the principle branch.  So the correctness of
 Eq.~\eqref{E:remark_equiv} can be proven by taking cosine in both sides of
 this equation and then by using compound angle formula.  Nevertheless, there
 is a slight difference in the region of minimization.  In
 Corollary~\ref{Cor:alpha_lower_equals_alpha_upper}, $|a_1|^2$ is minimized
 over a smaller interval of $[(1-\sqrt{\epsilon})/2,1/2]$, whereas in
 Ref.~\cite{GLM03}, it is minimized over a larger interval of $[0,1]$.
 \label{Rem:alpha_upper_bound}
\end{remark}

\section{Efficient And Reliable Computation Of
 $\boldsymbol{\alpha(\epsilon)}$}
\label{Sec:computational_complexity}

 One could compute $\alpha(\epsilon)$ through $\alpha_{<}(\epsilon)$ in
 Eq.~\eqref{E:ML_bound}.  This method involves two maximizations --- one for
 finding $\varphi$ given $\theta$ and the other for maximizing $(\cos\theta -
 \sqrt{\epsilon})/\sin\varphi$ over $\theta$.  Hence, it is very slow if
 generic optimization methods are used.  (There is a minor point.  For the
 trivial case of $\epsilon = 1$, there is nothing to maximize in
 Eq.~\eqref{E:ML_bound} as $\theta$ is fixed and value of $\varphi$ no longer
 relevant even if one insists on computing $\alpha(0)$ numerically.  I exclude
 this special case in almost all the subsequent discussions.)  Another method
 is to compute the above two maximizations by finding the unique roots of
 $f_\theta(\varphi) = 0$ and the last line of Eq.~\eqref{E:theta_opt_aux},
 respectively.  This is faster.  Nevertheless, I do not discuss the rate of
 convergence and stability of this approach here because I am going to report
 a much better method in the next paragraph.  The third way is to compute
 $\alpha(\epsilon)$ via $\alpha_{>}(\epsilon)$ in
 Eq.~\eqref{E:alpha_upper_expression} using a general minimization algorithm.
 This is faster than the first method as it involves only one minimization
 over $|a_1|^2$ or $\mu$ though it is slower than the second method.
 (Actually, no minimization is required for the special case of $\epsilon = 0$
 as the interval for minimization becomes a point.)  Minimization via $\mu$ is
 preferred as it is numerically more stable.  The only potential trouble is
 the serious rounding error in computing the square root part of the
 expression in the R.H.S. of Eq.~\eqref{E:alpha_upper_expression} when $\mu
 \approx \sin^{-1} \sqrt{1-\epsilon}$ or $\pi/2$.  Fortunately, this lost of
 significance has very little effect on the accuracy of the whole expression
 to be minimized in the second line of Eq.~\eqref{E:alpha_upper_expression}.

 There is one more way to compute $\alpha(\epsilon)$ that could be more
 efficient than a general minimization algorithm that is applicable to a
 smooth target function with possibly multiple local minima.  The trick is to
 use an additional property of the function to be minimized.  From the proofs
 of Theorem~\ref{Thrm:ML_saturation} and
 Corollary~\ref{Cor:alpha_lower_equals_alpha_upper} together with the fact
 that Eq.~\eqref{E:a1_to_mu} is a diffeomorphism if $\epsilon \in [0,1]$,
 there is a unique $\mu \in [\sin^{-1}\sqrt{1-\epsilon},\pi/2]$ minimizing the
 second line of Eq.~\eqref{E:alpha_upper_expression} if $\epsilon \in [0,1)$.
 By differentiating the expression in the second line of
 Eq.~\eqref{E:alpha_upper_expression}, I find that this minimizing $\mu$ obeys
\begin{equation}
 g(\mu) \equiv \frac{1 + (1-\epsilon)(\mu \cot\mu - 1)\csc^2 \mu}{\sqrt{1 -
 (1-\epsilon) \csc^2 \mu}} - 1  = 0 .
 \label{E:minimizing_a1}
\end{equation}
 (For $\epsilon = 1$, Eq.~\eqref{E:minimizing_a1} is trivial, giving no
 constraint on $\mu$.  Nevertheless, substituting any real-valued $\mu$ to
 Eq.~\eqref{E:alpha_upper_expression} still gives the correct answer of
 $\alpha_{>}(0) = 0$ if one insists on computing it numerically.  For
 $\epsilon = 0$, no minimization is needed as $\mu$ must be $\pi/2$.)  In this
 way, this particular minimization problem is reduced to a potentially much
 easier problem of finding a unique simple root in a closed interval of a
 single equation.  (In contrast, the second method requires root finding of
 two coupled equations for Eq.~\eqref{E:theta_opt_aux} depends on the solution
 of $f_\theta(\varphi) = 0$.)  Numerical experiment shows that Newton's method
 converges for any input $\epsilon \in (0,1)$ using the initial guess
 $(\sin^{-1}\sqrt{1-\epsilon} + \pi/2)/2$, namely, the mid-point of the
 possible interval for $\mu$.  The plot of $g(\mu)$ in Fig.~\ref{F:g_plot} for
 various $\epsilon$ strongly suggests that the basin of attraction of Newton's
 method is the whole possible interval for $\mu$.  In addition, rounding and
 truncation errors are not significant in evaluating $g(\mu)$ as well as the
 R.H.S. of Eq.~\eqref{E:alpha_upper_expression}.  Consequently, one can
 accurately find the simple root $\mu \in [\sin^{-1}\sqrt{1-\epsilon},\pi/2]$
 in Eq.~\eqref{E:minimizing_a1} through the quadratically convergent Newton's
 method.  Substituting this root to Eq.~\eqref{E:alpha_upper_expression} gives
 $\alpha(\epsilon)$.  Among the four, this is the fastest method to compute
 $\alpha(\epsilon)$ for $\epsilon \ne 1$.  Surely, one may further speed
 things up by accelerated convergence methods, but this is not the main point
 here.

\begin{figure}[t]
 \centering\includegraphics[width=7.5cm]{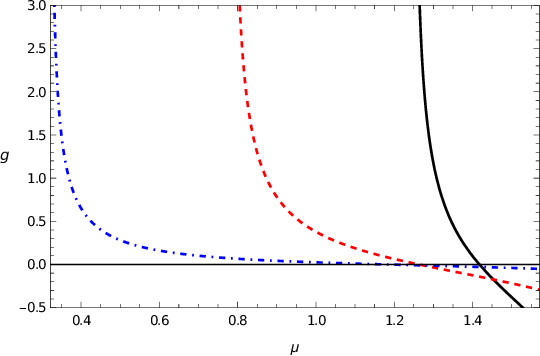}
 \caption{The function $g(\mu)$ for $\epsilon = 0.1$ (black solid curve),
  $0.5$ (red dashed curve) and $0.9$ (blue dot-dashed curve).  It is clear
  from the shape of these curves and their zeros that Newton's method
  converges for almost any initial guess of $\mu \in
  [\sin^{-1}\sqrt{1-\epsilon},\pi/2]$.
  \label{F:g_plot}}
\end{figure}

 Evaluating $\alpha_{>}(\epsilon)$ through numerically finding root of
 Eq.~\eqref{E:alpha_upper_bound_constraint} implicitly suggested in
 Ref.~\cite{GLM03}, in contrast, can be problematic.  It is not clear if
 Eq.~\eqref{E:alpha_upper_bound_constraint} has a unique root in the interval
 of interest although plotting the graph of
 Eq.~\eqref{E:alpha_upper_bound_constraint} strongly suggests it is indeed the
 case.  A more serious problem is numerical instability.  Observe that the
 R.H.S. of Eq.~\eqref{E:alpha_upper_bound_constraint} diverges at $z =
 (1-\sqrt{\epsilon})/2$.  Furthermore, numerical computation shows that the
 root of Eq.~\eqref{E:alpha_upper_bound_constraint} approaches
 $(1-\sqrt{\epsilon})/2$ as $\epsilon \to 1^-$.  In other words, when
 $\epsilon$ is close to $1$, one has to determine the precise location of the
 root of Eq.~\eqref{E:alpha_upper_bound_constraint} close to a singular point.
 To make things worse, the slope of the R.H.S. of
 Eq.~\eqref{E:alpha_upper_bound} diverges at $z = (1-\sqrt{\epsilon})/2$.
 Hence, a highly accurate root $z$ of
 Eq.~\eqref{E:alpha_upper_bound_constraint} is required to evaluate
 $\alpha_{>}(\epsilon)$ via Eq.~\eqref{E:alpha_upper_bound}.  All these can
 only be done with great care.  No wonder why the values of
 $\alpha_{>}(\epsilon)$ computed in this way using Newton's, secant and
 Brent's methods are either inaccurate or divergent when $\epsilon \lesssim
 1$.  For example, using the initial guess $p (1-\sqrt{\epsilon}/2 + (1-p)/2$
 with $p = 1/2$, Newton's method fails to find the root when $\epsilon \gtrsim
 0.76$.  Increasing $p$ to $0.99$, Newton's method works for $\epsilon = 0.76$
 but fails to find the root accurately for $\epsilon \gtrsim 0.9$.  Depending
 on the value of $\epsilon$, great care in choosing initial guess, bounding
 interval and stopping criterion are needed to obtain the root of
 Eq.~\eqref{E:alpha_upper_bound_constraint} and hence the value of
 $\alpha_{>}(\epsilon)$ correctly and accurately.  These complications make
 numerically evaluating $\alpha_{>}(\epsilon)$ by solving
 Eq.~\eqref{E:alpha_upper_bound_constraint} unattractive.

\begin{acknowledgments}
 In memory of my mother Wai Fong Kam.
\end{acknowledgments}

\appendix
\section{Appendix}
\numberwithin{equation}{section}
\begin{proof}[Proof of Lemma~\ref{Lem:cos_inequality}]
 Obviously, Inequality~\eqref{E:cos_inequality} follows directly from the
 first line of Eq.~\eqref{E:A_phi_def}.  So I need to show that the supremum
 in the first line of Eq.~\eqref{E:A_phi_def} exists and is equal to the
 second line of the same equation.  I first consider the case of $\theta \in
 (-\pi,\pi/2)$.  Denote the slope of the line joining the points
 $(\theta,\cos\theta)$ and $(x,\cos x)$ on the cosine curve $y = \cos x$ by
 $M_\theta(x) = (\cos x - \cos\theta)/(x-\theta)$ if $x\ne \theta$ and
 $M_\theta(\theta) = -\sin\theta$ if $x = \theta$.  Clearly, $M_\theta(x) > 0$
 if $x > \theta$ and $\cos x > \cos\theta$.  Moreover, for any fixed $b \in
 [-1,\cos\theta)$, the set $S_{b,\theta} = \{ x > \theta \colon \cos x = b \}$
 is non-empty and $\min S_{b,\theta} \in [|\theta|,\pi]$.  Besides,
 $M_\theta(x) < M_\theta(y) < 0$ whenever $x,y\in S_{b,\theta}$ and $x < y$.
 As a result,
 \begin{align}
  A_\theta &= -\inf_{x>\theta} M_\theta(x) = -\inf_{b\in [-1,\cos\theta)}
   M_\theta(\min S_{b,\theta}) \nonumber \\
  &= -\min_{x\in [|\theta,\pi]} M_\theta(x) ,
  \label{E:A_phi_comp}
 \end{align}
 where the last line is due to continuity of $M_\theta(x)$ and the fact that
 the closure of $\{ \min S_{b,\theta} \colon b\in [-1,\cos\theta) \}$ is
 $[|\theta|,\pi]$.  Therefore, $A_\theta$ is well-defined and the second line
 of Eq.~\eqref{E:A_phi_def} is correct when $\theta \in (-\pi,-\pi/2]$.

 For the case of $\theta \in (-\pi/2,\pi/2)$.  Note that $y = \cos x$ is
 strictly concave in the domain $x\in (-\pi/2,\pi/2)$.  So, Jensen's
 inequality demands that
 \begin{equation}
  p \cos\theta + (1-p) \cos \frac{\pi}{2} < \cos \left[ p\theta +
  \frac{(1-p)\pi}{2} \right]
  \label{E:Jensen}
 \end{equation}
 for $p = (\pi/2 - x)/(\pi/2 - \theta)$ as long as $-\pi/2 <  \theta < x <
 \pi/2$.  Simplifying Eq.~\eqref{E:Jensen} gives $M_\theta(x) >
 M_\theta(\pi/2)$.  Therefore, $\min_{x\in [|\theta|,\pi]} M_\theta(x) =
 \min_{x\in [\pi/2,\pi]} M_\theta(x)$ if $|\theta| < \pi/2$.  So, $A_\phi$ is
 well-defined and Eq.~\eqref{E:A_phi_def} is valid when $\theta \in
 (-\pi/2,\pi/2)$.

 For the case of $\theta \in [\pi/2,\pi)$, $M_\theta(x) > M_\theta(\theta)$
 for all $x \in (\theta,\pi]$ due to strict convexity of $y = \cos x$ in this
 interval.  Thus, $A_\theta = -\lim_{x\to \theta^+} M_\theta(x) = \sin\theta$.
 In other words, Eq.~\eqref{E:A_phi_def} holds if $\pi/2 \le \theta < \pi$.

 Finally, I show that the supremum (and hence maximum when $-\pi < \theta <
 \pi/2$) in Eq.~\eqref{E:A_phi_def} is attained by a unique $x \in
 [\max(\pi/2,|\theta|),\pi]$.  Suppose there were another such $y \ne x$ in
 the same domain that maximizes the second line of Eq.~\eqref{E:A_phi_def}.
 Then the straight line passing through $(\theta,\cos\theta)$, $(x,\cos x)$
 must also pass through $(y,\cos y)$.  Since the cosine function is strictly
 convexity in the interval $(\pi/2,\pi]$, Jensen's inequality implies that
 $\cos x + \cos y > 2 \cos([x+y]/2)$.  Hence, $M_\theta([x+y]/2) <
 M_\theta(x)$, contradicting the assumption that $x$ maximizes
 Eq.~\eqref{E:A_phi_def}.  This completes the proof.
\end{proof}

\begin{proof}[Proof of Corollary~\ref{Cor:cos_inequality}]
 Consider the case of $\theta \in I_1$.  Since the maximum in the second line
 of Eq.~\eqref{E:A_phi_def} is attained at $x = \varphi$, the line $L$ passing
 through $(\theta,\cos\theta)$ and $(\varphi,\cos\varphi)$ must be a tangent
 to the cosine curve $y = \cos x$ at $x = \varphi$.  For the case of $\theta
 \in I_2$, it is clear that the line $L$ and the cosine curve meet
 tangentially at $x = \varphi = \theta$.  Therefore in all cases, $-A_\theta$
 equals the slope of the cosine curve at $x = \varphi$, namely, $-\sin
 \varphi$.  Consequently, Inequality~\eqref{E:cos_inequality} can be rewritten
 as Inequality~\eqref{E:cos_inequality_refined}.  Besides, $f_\theta(x) \ge 0$
 for all $x\ge \theta$.

 Suppose $\theta \in I_1$.  Then the smooth function $f_\theta(x)$ has exactly
 three roots in $[\theta,\pi]$ counted by multiplicity, namely, a simple root
 at $x = \theta$ and a double root at $x=\varphi$.  For otherwise,
 $f_\theta(x)$ has at least four roots in $[\theta,\pi]$.  Hence,
 $f_\theta'(x) = -\sin x + \sin\varphi$ has at least three roots in
 $(\theta,\pi) \subset (-\pi,\pi)$, which is absurd.  Furthermore, using the
 notation and proof of Lemma~\ref{Lem:cos_inequality}, $M_\theta(x) >
 M_\theta(\varphi)$ for all $x > \pi$.  This implies $f_\theta(x) > 0$
 whenever $x > \pi$.  Therefore, $\theta$ and $\varphi$ are the only roots of
 $f_\theta(x)$ in the domain $[\theta,+\infty)$.  In other words, $x =
 \varphi$ is the unique point in $(\theta,+\infty)$ that maximizes
 $(\cos\theta - \cos x)/(x-\theta)$.  Besides, $x \in
 [\max(\pi/2,|\theta|),\pi]$.  The case when $\theta \in I_2$ can be proven in
 the same manner.

 From Lemma~\ref{Lem:cos_inequality}, $\varphi = \theta$ whenever $\theta \in
 I_2$.  Hence, $\varphi$, $\varphi - \theta$ and $\varphi + \theta$ are
 increasing, decreasing and increasing functions of $\theta \in I_2$,
 respectively.

 I now show that $\varphi$ and $\varphi - \theta$ are decreasing functions of
 $\theta \in I_1$.  As already shown in the second paragraph of this proof,
 for each $\theta \in I_1$, $f_\theta(\varphi) = 0$ has a unique solution
 $\varphi \in I_2$.  Moreover, this $\varphi$ is equal to the $x$ that
 maximizes the second line of Eq.~\eqref{E:A_phi_def} in
 Lemma~\ref{Lem:cos_inequality}.  Regarding $f_\theta(\varphi)$ as a function
 of $\theta$ and $\varphi$, then the implicit function theorem implies that
 \begin{equation}
  \frac{d\varphi}{d\theta} = \frac{\sin\varphi - \sin\theta}{(\varphi-\theta)
  \cos\varphi}
  \label{E:dvarphi_dtheta}
 \end{equation}
 provided that the denominator of Eq.~\eqref{E:dvarphi_dtheta}, namely,
 $\partial f_\theta / \partial\varphi$, is non-zero.  This condition is
 satisfied as $\theta \in I_1$ and $\varphi \in I_2$.  Consequently, from
 Eq.~\eqref{E:dvarphi_dtheta}, to prove that $\varphi$ and hence $\varphi -
 \theta$ are decreasing functions of $\theta \in I_1$, I have to show that
 $\sin\varphi \ge \sin\theta$ for $\theta \in I_1$.  Since this inequality is
 trivially true when $-\pi/2 \le \theta \le 0$, I only need to consider the
 remaining case of $\theta \in (0,\pi/2)$.  In this case, it suffices to prove
 that $\varphi \le \pi - \theta$.  It is straightforward to see that the line
 joining $(\theta,\cos\theta)$ and $(\pi-\theta, \cos [\pi-\theta])$ meets the
 cosine curve $y = \cos x$ also at $x = \pi/2$.  Furthermore, convexity of
 this cosine curve in the domain $[\pi/2,\pi]$ implies that this cosine curve
 must lie below this line for $x \in (\pi/2,\pi - \theta)$.  Hence, the unique
 maximum point in the second line of Eq.~\eqref{E:A_phi_def} is attained at $x
 = \varphi < \pi - \theta$.  So, it is proved.

 Finally, I show that $\varphi + \theta$ is an increasing function of $\theta
 \in I_1$.  I claim that $d\varphi/d\theta \ge -1$ for $\theta \in I_1$.  From
 Eq.~\eqref{E:dvarphi_dtheta} and the fact that $\varphi \in I_2$, I obtain
 \begin{align}
  & \quad \frac{d\varphi}{d\theta} \ge -1 \nonumber \\
  \Longleftrightarrow & \quad \sin\varphi (\sin\varphi - \sin\theta) \le
   -(\varphi - \theta) \sin\varphi \cos\varphi \nonumber \\
  \Longleftrightarrow & \quad \sin\varphi (\sin\varphi - \sin\theta) \le
   \cos\varphi (\cos\varphi - \cos\theta) \nonumber \\
  \Longleftrightarrow & \quad \sin \left( \frac{3\varphi + \theta}{2} \right)
   \sin \left( \frac{\varphi - \theta}{2} \right) \ge 0 .
  \label{E:dvarphi_dtheta_inequ_equivalent}
 \end{align}
 Since $\varphi - \theta$ is a decreasing function of $\theta$ in the domain
 $I_1$, $0 < \varphi - \theta < 2\pi$ in the same domain.  Therefore, it
 suffices to prove that $0 \le 3\varphi + \theta \le 2\pi$ for $\theta \in
 I_1$.  As $\theta > -\pi$ and $\varphi > \pi/2$, surely $3\varphi + \theta >
 0$.  Note that $3\varphi + \theta = 4\varphi - (\varphi - \theta)$ is a
 difference of a decreasing and an increasing function of $\theta \in I_1$.
 As a result, $3\varphi + \theta < \left. (3\varphi + \theta) \right|_{\theta
 = \pi/2} = 2\pi$.  This completes the proof.
\end{proof}

\bibliographystyle{apsrev4-2}
\bibliography{qsl1.4}

\end{document}